\newtheorem{theorem}{Theorem}[section]
\newtheorem{claim}[theorem]{Claim}
\newtheorem{definition}[theorem]{Definition}
\newcommand{\ignore}[1]{}
\newcommand{\R}{\mathbb R}
\newcommand{\eps}{\varepsilon}
\newcommand{\bb}{\boldsymbol{b}}
\newcommand{\cei}[1]{\lceil#1\rceil}
\newcommand{\Sec}[1]{\hyperref[sec:#1]{\S\ref*{sec:#1}}} %section
\newcommand{\Eqn}[1]{\hyperref[eq:#1]{(\ref*{eq:#1})}} %equation
\newcommand{\Fig}[1]{\hyperref[fig:#1]{Fig.\,\ref*{fig:#1}}} %figure
\newcommand{\Tab}[1]{\hyperref[tab:#1]{Tab.\,\ref*{tab:#1}}} %table
\newcommand{\Thm}[1]{\hyperref[thm:#1]{Theorem\,\ref*{thm:#1}}} %theorem
\newcommand{\Fact}[1]{\hyperref[fact:#1]{Fact\,\ref*{fact:#1}}} %fact
\newcommand{\Lem}[1]{\hyperref[lem:#1]{Lemma\,\ref*{lem:#1}}} %lemma
\newcommand{\Prop}[1]{\hyperref[prop:#1]{Prop.~\ref*{prop:#1}}} %property
\newcommand{\Cor}[1]{\hyperref[cor:#1]{Corollary~\ref*{cor:#1}}} %corollary
\newcommand{\Conj}[1]{\hyperref[conj:#1]{Conjecture~\ref*{conj:#1}}} %conjecture
\newcommand{\Def}[1]{\hyperref[def:#1]{Definition~\ref*{def:#1}}} %definition
\newcommand{\Alg}[1]{\hyperref[alg:#1]{Alg.~\ref*{alg:#1}}} %algorithm
\newcommand{\Ex}[1]{\hyperref[ex:#1]{Ex.~\ref*{ex:#1}}} %example
\newcommand{\Clm}[1]{\hyperref[clm:#1]{Claim~\ref*{clm:#1}}} %example
\newcommand{\Step}[1]{\hyperref[step:#1]{Step~\ref*{step:#1}}} %example
\begin{document}
\title{A {$\widetilde{O}(n)$} Non-Adaptive Tester for Unateness }
\date{}

\author{Deeparnab  Chakrabarty \\
 Microsoft Research Bangalore\\
{\tt dechakr@microsoft.com}
\and
C. Seshadhri \\
University of California, Santa Cruz\\
{\tt sesh@ucsc.edu}
}

\def\hf{\hat{f}}
\def\hg{\hat{g}}
\def\Inf{{\sf Inf}}
\def\Viol{{\sf Viol}}
\def\I{{\mathbf I}}
\def\V{{\mathbf V}}
\def\R{{\mathbf R}}
\def\bb{\mathbf{b}}

\newcommand{\unatetest}{{\tt Unate-test}}

\maketitle
\begin{abstract} Khot and Shinkar (RANDOM, 2016) recently describe an adaptive, $O(n\log(n)/\varepsilon)$-query tester for unateness of Boolean functions $f:\{0,1\}^n \mapsto \{0,1\}$. In this note, we describe a simple non-adaptive, $O(n\log(n/\varepsilon)/\varepsilon)$ -query tester for unateness for real-valued functions over the hypercube.
\end{abstract}

\def\p{\mathbf p}
\def\B{\{0,1\}^n}

\section{Introduction} \label{sec:intro}

Let $f$ be a function $f:\{0,1\}^n \mapsto R$ defined over the Boolean hypercube where $R$ is some ordered range.
We use $e_i$ to denote the unit vectors in $\{0,1\}^n$ that has
$1$ in the $i$th coordinate, and $0$s at other coordinates. The \emph{$i$-th
partial derivate} at $x$ is $f(x\oplus e_i) - f(x)$ if $x_i = 0$
and $f(x) - f(x\oplus e_i)$ is $x_i = 1$. This is denoted by the
function $\partial_i f$. Note that a function is monotonically increasing (or simply monotone),
if $\partial_i f(x) \geq 0$ for all $x\in \{0,1\}^n$. 

Unateness is a generalization of monotonicity. A function is unate if in every coordinate it is either monotone or anti-monotone.
More precisely, a function $f$ is unate if for all $i \in [n]$, either
$\partial_i f(x) \geq 0$ for all $x$, or $\partial_i f(x) \leq 0$ for all $x$. The problem of unateness testing was introduced by Goldreich et al.~\cite{GGLRS00}
in their seminal paper on testing monotonicity. For Boolean functions $f:\{0,1\}^n \to \{0,1\}$,~\cite{GGLRS00} described a non-adaptive $O(n^{3/2}/\eps)$-query tester
(while for monotonicity of Boolean functions they described a non-adaptive $O(n/\eps)$-query tester). To our knowledge, there was no further progress on this problem, until the recent result of Khot and Shinkar~\cite{KhSh16} that gives an adaptive $O(n\log (n)/\eps)$-query tester for Boolean functions.
Our main theorem is the following. 

\begin{theorem} \label{thm:main} Consider functions $f:\{0,1\}^n \mapsto R$, where
$R$ is an arbitrary ordered set.
There exists an one-sided error, non-adaptive, $O((n/\eps) \log(n/\eps))$-time tester
for unateness.
\end{theorem}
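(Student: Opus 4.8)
The plan is to reduce ``$\eps$-far from unate'' to a statement about edge violations per coordinate, and then to catch such a violation non-adaptively by reading many small random subcubes across several scales.

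Fix $i\in[n]$; call an edge $\{x,x\oplus e_i\}$ with $x_i=0$ \emph{$i$-rising} if $\partial_i f(x)>0$ and \emph{$i$-falling} if $\partial_i f(x)<0$, and let $I_i^{+},I_i^{-}$ count the rising and falling edges in direction $i$ (there are $2^{n-1}$ edges in each direction). Then $f$ is unate iff $\min(I_i^{+},I_i^{-})=0$ for every $i$. The first step is the combinatorial lemma
\[
  f\text{ is $\eps$-far from unate}\ \Longrightarrow\ \sum_{i=1}^{n}\min(I_i^{+},I_i^{-})\ \ge\ \eps\,2^{n-1}.
\]
To prove it, let $s_i\in\{+,-\}$ be the majority orientation of coordinate $i$ (so $\min(I_i^{+},I_i^{-})$ counts the edges in direction $i$ violating $s_i$), and let $M_i$ reorder the two endpoint values on each line in direction $i$ to agree with $s_i$. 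Then $M_i$ changes at most $2\min(I_i^{+},I_i^{-})$ values and makes coordinate $i$ monotone in orientation $s_i$; and by the classical fact from the monotonicity literature that sorting along one coordinate never increases the number of violated edges along another coordinate --- a two-coordinate statement, insensitive to relabeling either coordinate, hence valid for descending sorts too \cite{GGLRS00} --- applying $M_1,\dots,M_n$ in turn yields a function monotone in orientation $s_i$ along every coordinate, i.e.\ a unate function, at the cost of at most $2\sum_i\min(I_i^{+},I_i^{-})$ changed values out of $2^n$. Since $f$ is $\eps$-far this is $\ge\eps2^n$, proving the lemma.

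The tester is non-adaptive, one-sided, and compares values only (so $R$ may be arbitrary). It runs in $O(\log n)$ scales; at a scale indexed by an integer $d$ ranging roughly from $\log(1/\eps)$ to $\log(n/\eps)$, it picks $N_d$ uniformly random size-$d$ coordinate sets $S$, each with a uniform assignment to the other $n-d$ coordinates, queries $f$ on all $2^d$ points of the resulting subcube, and rejects if the restriction of $f$ to some queried subcube is not unate --- equivalently, if some queried subcube contains two parallel edges, one rising and one falling. Such a pair certifies non-unateness, so a unate $f$ is never rejected; the $N_d$ are chosen so that $\sum_d N_d2^d=O((n/\eps)\log(n/\eps))$, which also bounds the running time. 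For soundness, suppose $f$ is $\eps$-far, so $\sum_i\mu_i\ge\eps$ where $\mu_i:=\min(I_i^{+},I_i^{-})/2^{n-1}$. Bucketing the coordinates dyadically by the value of $\mu_i$, and noting that only $O(\log(n/\eps))$ buckets carry appreciable weight, some bucket $B$ of $m:=|B|$ coordinates of common order $\mu_i\approx\mu$ has $m\mu=\Omega(\eps/\log(n/\eps))$. Run the scale $d$ with $2^{d}$ of order $(m/\eps)\cdot\mathrm{polylog}(n/\eps)$: a random size-$d$ set $S$ contains a given coordinate of $B$ with probability $\approx d/n$, and conditioned on that, among the $\approx2^{d-1}$ edges of the subcube in that direction the expected number of rising (and of falling) edges is $\gtrsim\mathrm{polylog}(n/\eps)$, so --- up to the correlations among these edges, which have to be controlled --- that subcube has a rising/falling pair in that direction with constant probability. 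Hence a single queried subcube at this scale contains a certificate with probability of order $md/n$ (up to polylog), so $N_d$ of order $n/(md)$ (up to polylog) reject with probability $\ge2/3$; the reason the scales fit within budget is that $N_d2^d$ is of order $(n/(md))\cdot(m/\eps)=n/\eps$ (up to polylog), the $m$ cancelling.

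The main obstacle is exactly this matching of scales: showing that \emph{every} $\eps$-far function is caught at some one scale, and that the per-scale costs $N_d2^d$ can be arranged to sum to only $O((n/\eps)\log(n/\eps))$ rather than picking up extra logarithmic factors (from the dyadic bucketing and from the concentration needed to upgrade the constant per-subcube success probability to high probability at the matching scale). The hard regime is when the violation budget is spread thinly over all $n$ coordinates, $\mu_i\approx\eps/n$: this is what defeats the plain uniform random-edge tester --- whose birthday-type analysis only reaches $\widetilde{O}(n^{3/2}/\eps)$ --- and is why the large-scale subcubes are indispensable. The rest --- pinning down the $N_d$, the correlation/concentration estimate inside a subcube, combining scales by a union bound, and the harmless treatment of edges with $\partial_i f=0$ --- I expect to be routine.
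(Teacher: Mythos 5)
Your combinatorial lemma --- $\eps$-far from unate implies $\sum_i \min(I_i^{+},I_i^{-}) \ge \eps\,2^{n-1}$ --- and its proof via per-coordinate sorting operators (using that sorting along one coordinate does not increase violations along another, in either orientation) are correct, and give a bound equivalent to, indeed a constant factor stronger than, the one the paper gets by reducing $\mathbf{b}$-monotonicity to monotonicity and invoking the edge-violation bound of Chakrabarty--Seshadhri (or the dimension-reduction theorem of Chakrabarty et al.). Your Levin-style investment over $O(\log(n/\eps))$ dyadic scales, with the per-scale cost $N_d 2^d = \widetilde{O}(n/\eps)$ independent of the bucket size $m$, is also exactly the right high-level mechanism, and it is the one the paper uses (following Berman--Raskhodnikova--Yaroslavtsev).

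The genuine divergence is the tester itself, and the gap you flag --- ``up to the correlations among these edges, which have to be controlled'' --- is not a routine loose end; it is fatal as stated. You query a random $d$-dimensional subcube and hope to find, for some $i \in S$, both a rising and a falling $i$-edge among the $\approx 2^{d-1}$ parallel $i$-edges inside the subcube. But those $2^{d-1}$ edges all share the \emph{same} assignment on the $n-d$ coordinates outside $S$: they are one highly structured sample, not $2^{d-1}$ independent ones. ``Expected count $\gtrsim \mathrm{polylog}$'' therefore does not yield ``at least one, with constant probability.'' Concretely, if the rising set $U_i = \{x : \partial_i f(x)>0\}$ is an affine subcube $\{x : x_j = 0 \ \forall j \in T\}$ with $|T| = k$ (so $\mu_i \approx 2^{-k}$), a random $d$-subcube containing $i$ touches $U_i$ only if the fixed values on $T \setminus S$ are all zero, probability $\approx 2^{-|T\setminus S|} \approx \mu_i$, not a constant; and if $D_i$ sits on the opposite face $\{x : x_j = 1\ \forall j\in T\}$, seeing both a rising and a falling $i$-edge forces $T \subseteq S$, which for $d \ll n$ is astronomically unlikely. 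So the per-subcube rejection probability is not $\Theta(md/n)$ as your budget calculation needs, and a second-moment/concentration estimate cannot rescue it without structural assumptions on $U_i, D_i$. The paper sidesteps this entirely with a simpler tester: at scale $r$, sample a u.a.r.\ dimension $i$ and then a set $R_i$ of $3\cdot 2^r$ \emph{independent} u.a.r.\ points, evaluating $\partial_i f$ at each; the miss probability is then literally $(1-|U_i|/2^n)^{|R_i|} + (1-|D_i|/2^n)^{|R_i|} \le 2(1-\mu_i)^{|R_i|}$ with nothing to control, and the investment over $r$ closes the argument. If you replace your per-subcube check with ``sample a dimension, then independently sample $\Theta(2^r)$ derivative evaluations in that dimension,'' your scale bookkeeping carries through and you recover the paper's proof.
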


%\subsection{Connections to previous work} \label{sec:prev}

Monotonicity testing has been extensively studied in the past two decades~\cite{GGLRS00,DGLRRS99,EKK+00,HK03,HK04,BGJRW09,BCG+10,ChSe13-j,ChSe13,BeRaYa14,ChenST14,ChenDST15,ChDi+15,BeBl16}. 
We employ 
a previous result on testing derivative bounded properties by Chakrabarty et al.~\cite{ChDi+15}.
That paper provides general theorems about the testability of properties that are specified
in terms of the partial derivative being bounded. 

\begin{definition}
	Given an $n$-dimensional bit vector ${\bf b}$, call a function $f:\{0,1\}^n \to R$ ${\bf b}$-monotone 
	if for all $i$ with ${\bf b}_i = 0$ we have $\partial_i f(x) \geq 0$ for all $x$, and for all $i$ with ${\bf b}_i = 1$ we have $\partial_i f(x) \leq 0$ for all $x$.
\end{definition}
\noindent
Note that ${\bf 0}$-monotonicity is simply the standard notion of monotonicity.
Also note that a function is unate iff it is ${\bf b}$-monotone for some ${\bf b}$.

The property of ${\bf b}$-monotonicity is a \emph{derivative-bounded property},
in the language of~\cite{ChDi+15}.
A dimension reduction theorem for derivative properties (Theorem 8 in Arxiv version of~\cite{ChDi+15}),
when instantiated for ${\bf b}$-monotonicity, implies the following theorem.

\begin{theorem} \label{thm:dimred} Fix bit vector ${\bf b}$ and function $f:\{0,1\}^n \to R$.
	Let $\eps$ denote the  distance of $f$ to ${\bf b}$-monotonicity.
Let $\mu_i$ be the fraction of points where $\partial_i f$ violates
${\bf b}$-monotonicity, that is, the number of hypercube edges across dimension $i$ which violate ${\bf b}$-montonicity is $\mu_i 2^n$.
Then $\sum_{i=1}^n \mu_i \geq \eps/4$.
\end{theorem}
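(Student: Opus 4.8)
The plan is to first reduce $\bb$-monotonicity to ordinary monotonicity, and then to establish the underlying dimension-reduction inequality for monotonicity (following the machinery of~\cite{ChDi+15}).

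\emph{Reduction to ordinary monotonicity.} For the bit vector $\bb$ let $\sigma\colon\B\to\B$ be the coordinate flip $\sigma(x)_i=x_i\oplus\bb_i$. This is an involutive automorphism of the hypercube graph that maps every dimension-$i$ edge to a dimension-$i$ edge, it induces a Hamming isometry $h\mapsto h\circ\sigma$ on functions, and, unwinding the definition of $\partial_i$, the edge $\{y,y\oplus e_i\}$ is decreasing for $g:=f\circ\sigma$ precisely when the edge $\{\sigma(y),\sigma(y)\oplus e_i\}$ violates $\bb$-monotonicity for $f$. Hence $h$ is $\bb$-monotone iff $h\circ\sigma$ is monotone, so $h\mapsto h\circ\sigma$ carries the $\bb$-monotone functions bijectively onto the monotone ones; therefore the distance of $g$ to monotonicity equals $\eps$, and the number of decreasing dimension-$i$ edges of $g$ equals $\mu_i2^n$. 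It thus suffices to prove: if $g\colon\B\to R$ is $\eps$-far from monotone, then $\sum_i\mu_i\ge\eps/4$, where $\mu_i2^n$ now denotes the number of decreasing dimension-$i$ edges of $g$.

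\emph{Restating the claim, and the Boolean case.} Since the edges in a fixed direction $i$ form a perfect matching of the cube, the cheapest way to make $g$ non-decreasing along coordinate $i$ is to change one endpoint of each decreasing dimension-$i$ edge; hence $\mu_i$ equals the distance of $g$ to the set of functions that are monotone along coordinate $i$, and the inequality we must prove says precisely that the distance of $g$ to monotonicity is at most $4$ times the sum over $i$ of these single-coordinate distances --- a dimension-reduction inequality. For Boolean range this has the classical sorting-operator proof, with constant $2$: let $S_i$ put the two values on each dimension-$i$ edge into non-decreasing order; then (i) $S_n\circ\cdots\circ S_1$ sends any function to a monotone one (by induction, directions $1,\dots,k$ stay sorted after applying $S_1,\dots,S_k$), and (ii) for $i\ne j$, $S_i$ never increases the number of decreasing dimension-$j$ edges --- which reduces to checking the $2^4$ value patterns on a single two-dimensional face, and is where being Boolean is essential. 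Writing $f_0=g$ and $f_k=S_k f_{k-1}$, fact (ii) gives that $f_{i-1}$ has at most $\mu_i2^n$ decreasing dimension-$i$ edges, so $S_i$ changes at most $2\mu_i2^n$ values; since $f_n$ is monotone by (i), we get $\eps\le 2\sum_i\mu_i$.

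\emph{General ordered range --- the main obstacle.} The real work is in allowing an arbitrary ordered range, where (ii) fails ($S_i$ genuinely can turn a non-decreasing dimension-$j$ edge into a decreasing one) and the naive escape --- reducing to the Boolean level functions $x\mapsto\mathbf 1[g(x)>t]$ --- over-counts, because a decreasing dimension-$i$ edge whose two values are far apart in the order is charged to many thresholds, so $\sum_i\mu_i(g)$ cannot be read off from the Boolean projections. Here I would use the dimension-reduction theorem of~\cite{ChDi+15}, whose backbone is the range-independent identity that $\eps\cdot2^n$ equals the minimum size of a vertex cover of the \emph{violating-pair graph} $H$ on $\B$, whose edges are the comparable pairs $x\prec y$ with $g(x)>g(y)$: ``$\ge$'' holds because some endpoint of every such pair must be changed, and ``$\le$'' because any assignment off a vertex cover that has no violating pair among its vertices extends to a monotone function (give each removed point the largest kept value lying below it, or the minimum of $R$ if there is none). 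It then remains to build a vertex cover of $H$ of size at most $4\sum_i\mu_i2^n$ out of the at most $2\sum_i\mu_i2^n$ endpoints of decreasing cube-edges: every violating pair $x\prec y$ has a monotone path that crosses some decreasing cube-edge, and a charging argument over these paths shows that ``expanding'' the decreasing-edge endpoints into a genuine vertex cover costs only a bounded factor. Controlling this expansion --- passing from ``few violated edges'' to ``a small cover of all violating pairs'' --- is the technical heart of the proof and the origin of the constant $4$.
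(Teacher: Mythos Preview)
Your reduction of $\bb$-monotonicity to ordinary monotonicity via $g=f\circ\sigma$ is exactly what the paper does (it writes $g(x)=f(x\oplus\bb)$), and the paper likewise does not prove the resulting monotonicity inequality but simply cites it --- either directly as Theorem~8 of~\cite{ChDi+15}, or, after the same reduction you give, as the edge lower bound of~\cite{ChSe13}, which in fact yields the sharper $\sum_i\mu_i\ge\eps/2$. So at the level of what the paper itself argues, your proposal is correct and follows the same route.

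One remark on the additional sketch you give for general range: the vertex-cover characterization of $\eps\cdot 2^n$ is correct, but the step you flag as ``the technical heart'' --- expanding the decreasing-edge endpoints into a vertex cover of the full violating-pair graph with only a bounded blowup --- is literally equivalent to the inequality $\eps\le O(\sum_i\mu_i)$ you are trying to prove, so invoking an unspecified ``charging argument'' there does no actual work. The endpoints themselves need not cover long-range violating pairs (on $\{0,1\}^3$ one can arrange $f(000)>f(111)$ while both $000$ and $111$ are monotone with all their neighbors), and the arguments in~\cite{ChSe13,ChDi+15} do not proceed by constructing such a cover. Since the paper also black-boxes this step, this is not a defect relative to the paper; just be aware that your outline of how the black box is opened is not quite how it is done in the cited references.
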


The above theorem (in fact a stronger version without the $4$ in the denominator) can also be obtained by observing that $f$ is $\bb$-monotone (resp, $\eps$-far from being $\bb$-monotone) iff the function $g(x) := f(x\oplus b)$ is monotone (resp, $\eps$-far from being monotone). Every hypercube edge that violates monotonicity for $g$ violates $\bb$-monotonicity for $f$.
A previous result of the authors shows that if a function $g$ is $\eps$-far from being monotone, then it has $\eps 2^{n-1}$ hypercube edges violating monotonicity~\cite{ChSe13}. (Such results were previously known for the case of Boolean range~\cite{GGLRS00}, and weaker results for general range~\cite{DGLRRS99}.  Refer to~\cite{ChSe13} for more details.) 

One can show that it suffices to query $\partial_i f$ at $O(1/\mu_i)$ points to detect a violation to unateness. We need to ``interpolate" between two opposite scenarios: exactly one $\mu_i = \Omega(\eps)$ and all others are $0$ versus all $\mu_i = \Theta(\eps/n)$. An efficient strategy for achieving this is Levin's investment strategy (refer to Section 8.2.4 of Goldreich's book~\cite{Go-book}). A tighter analysis of this method is given by Berman et al~\cite{BeRaYa14}, which is effectively what we use.
For the sake of completeness, we repeat the calculations of~\cite{BeRaYa14} for a complete proof.

\section{The Tester} \label{sec:tester}

\medskip
\fbox{
\begin{minipage}{0.9\textwidth}
{\tt \unatetest$(f,\eps)$}

\smallskip
\begin{compactenum}
    \item For $r = 1, 2, \ldots, L := \cei{\log(8n/\eps)}$: \\
%    \begin{compactenum}
         Repeat $s_r = \cei{\frac{20n}{\eps\cdot 2^r}}$ times
        \begin{compactenum}
            \item \label{step:samp} Sample u.a.r. dimension $i$.
            \item \label{step:set} Sample a set $R_i$ of $3 \cdot 2^r$ u.a.r. points in the hypercube and evaluate
            $\partial_i f$ at all these points.
            \item \label{step:rej} If there is some $x \in R_i$ such that $\partial_if(x) > 0$ {\bf and}
           $y \in R_i$ such that $\partial_i f(y) < 0$, reject and abort.
        \end{compactenum}
%    \end{compactenum}
    \item Accept (since tester has not rejected so far)
\end{compactenum}
\end{minipage}}

\medskip

It is evident that this is a non-adaptive, one-sided tester. Furthermore, 
the running time is $O((n/\eps)\log(n/\eps))$. It suffices to prove
the following.

\begin{theorem} \label{thm:rej} If $f$ is $\eps$-far from being unate,
\unatetest{} rejects with probability at least $1-1/e$.
\end{theorem}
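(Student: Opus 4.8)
The plan is to run the standard geometric-scale (``Levin investment'') argument, in the form used by Berman--Raskhodnikova--Yaroslavtsev, on top of \Thm{dimred}. Fix $f$ that is $\eps$-far from unate. For each coordinate $i$ let $a_i$ (resp.\ $d_i$) be the fraction of the $2^{n-1}$ edges in direction $i$ along which $f$ strictly increases (resp.\ strictly decreases), and set $m_i:=\min(a_i,d_i)$; note that $\partial_i f$ takes the same value on the two endpoints of any direction-$i$ edge, so drawing a uniform random point of $\B$ and reading $\partial_i f$ there is exactly drawing a uniform random direction-$i$ edge (this is the only place the ``points versus edges'' distinction enters). First I would lower bound $\littlesum_i m_i$: let $\bb^\star$ be the bit vector with $\bb^\star_i=0$ if $a_i\ge d_i$ and $\bb^\star_i=1$ otherwise, so the fraction of direction-$i$ edges violating $\bb^\star$-monotonicity is precisely $m_i$. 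Since every $\bb$-monotone function is unate, $f$ is $\eps$-far from $\bb^\star$-monotone, and \Thm{dimred} gives $\littlesum_{i=1}^n m_i \ge \eps/4$.

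Next I would pass to the ``heavy'' coordinates. Deleting every $i$ with $m_i<\eps/(8n)$ discards total weight at most $n\cdot\eps/(8n)=\eps/8$, so $H:=\{i:m_i\ge \eps/(8n)\}$ has $\littlesum_{i\in H}m_i\ge\eps/8$. I then bucket $H$ by scale: for $r=1,\dots,L$, where $L=\cei{\log(8n/\eps)}$, set $S_r:=\{i\in H : 2^{-r}\le m_i<2^{-(r-1)}\}$. Because $m_i\le\tfrac12$ always (as $a_i+d_i\le1$) while $m_i\ge\eps/(8n)\ge 2^{-L}$ on $H$, each heavy coordinate falls into exactly one $S_r$ with $1\le r\le L$. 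Writing $M_r:=\littlesum_{i\in S_r}m_i$, we have $\littlesum_r M_r\ge\eps/8$, and the bound $m_i<2^{-(r-1)}$ on $S_r$ forces the cardinality bound $|S_r|\ge M_r\,2^{r-1}$.

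The heart of the proof is a per-scale estimate. Consider one repetition of the inner loop at scale $r$, conditioned on the sampled dimension being a fixed $i\in S_r$. The $3\cdot2^r$ evaluations of $\partial_i f$ fail to witness a unateness violation only if they miss every increasing direction-$i$ edge or miss every decreasing one, so by a union bound this failure probability is at most $(1-a_i)^{3\cdot2^r}+(1-d_i)^{3\cdot2^r}\le 2(1-m_i)^{3\cdot2^r}\le 2e^{-3}<\tfrac1{10}$, using $m_i\ge2^{-r}$. Hence one repetition at scale $r$ detects a violation with probability at least $\tfrac{9}{10}\cdot|S_r|/n$, and as the $s_r=\cei{\tfrac{20n}{\eps\cdot 2^r}}$ repetitions are independent, the chance of no detection at scale $r$ is at most $\exp\!\big(-\tfrac{9}{10}\cdot\tfrac{|S_r|}{n}\cdot s_r\big)\le\exp\!\big(-\tfrac{9M_r}{\eps}\big)$, after substituting $s_r\ge 20n/(\eps2^r)$ and $|S_r|\ge M_r2^{r-1}$. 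Different scales use fresh independent randomness and the tester accepts only if every scale fails to detect, so
\[
  \Pr[\text{accept}]\;\le\;\prod_{r=1}^{L}\exp\!\Big(-\tfrac{9M_r}{\eps}\Big)\;=\;\exp\!\Big(-\tfrac{9}{\eps}\littlesum_r M_r\Big)\;\le\;\exp(-9/8)\;<\;1/e,
\]
as required.

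I expect the only real difficulty to be the constant bookkeeping, since every slack must be spent carefully. The per-scale detection probability must be pushed strictly above $1/2$ — here to $1-2e^{-3}$, which is exactly why $|R_i|=3\cdot2^r$ and not $2^r$; the ``heavy'' threshold $\eps/(8n)$ must be small enough that $2^{-L}$ drops below it yet large enough that the discarded weight $\eps/8$ still leaves more than half of the $\ge\eps/4$ total; and the repetition count needs the constant $20$ so that $\prod_r\exp(-9M_r/\eps)$ clears $e^{-1}$. Beyond these choices the argument is the routine geometric-scale union bound, and it uses nothing past \Thm{dimred}.
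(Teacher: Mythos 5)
Your proof is correct and follows essentially the same route as the paper: reduce to $\bb$-monotonicity to invoke \Thm{dimred}, bucket coordinates by the dyadic scale of their violation fraction $m_i$, lower-bound the per-scale detection probability by a union bound over the $3\cdot 2^r$ samples, and multiply the per-scale failure probabilities. The only cosmetic difference is organizational: you discard the light coordinates $m_i<\eps/(8n)$ \emph{before} bucketing and track $M_r=\sum_{i\in S_r}m_i$, whereas the paper buckets all coordinates and then truncates the sum $\sum_r |S_r|/2^r$ at $r=L$; both give the same constants.
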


\begin{proof} For dimension $i$, let $U_i$ be the set of points in $\{0,1\}^n$
where $\partial_i f(x) > 0$. Analogous, let $D_i$ be the set of points where $\partial_i f(x) < 0$.
The tester rejects iff it finds a triple $i, x, y$ such that $x \in U_i$
and $y \in D_i$.  Let $\mu_i := \min(|U_i|, |D_i|)/2^n$.

Define the $n$-dimensional bit vector ${\bf b}$ as follows: ${\bf b}_i = 0$
if $|U_i| > |D_i|$ and $\bb_i = 1$ otherwise. 
Observe that the fraction of points
where $\partial_if$ violates ${\bf b}$-monotonicity is precisely $\mu_i$.
Since $f$ is $\eps$-far from being unate,
$f$ is $\eps$-far from being ${\bf b}$-monotone. 
By \Thm{dimred}, $\sum_i \mu_i \geq \eps/4$.\smallskip

\noindent
For any integer $r \geq 1$, define $S_r := \{i \in [n] ~:~ \mu_i \in (1/2^r, 1/2^{r-1}]\}$. 

\begin{claim} \label{clm:sum} $\sum_{r=1}^L  |S_r|/2^r \geq \eps/16$.
\end{claim}

\begin{proof}
Observe that
$\sum_{r > L} |S_r|/2^r \leq (\eps/8n) \sum_r |S_r| = \eps/8$ since $\sum_r |S_r| = n$.
Since for any $i\in S_r$ we have $\frac{1}{2^{r-1}} \geq \mu_i$, we get that
$\sum_r |S_r|/2^r \geq \sum_i \mu_i/2  \geq \eps/8$ from Theorem~\ref{thm:dimred}. We subtract these bounds
to prove the claim.
\end{proof}

Fix $r$. Let $p_r$ be the probability that \Step{samp}, \Step{set}, and 
\Step{rej} reject for this $r$. Then the probability that the tester rejects is
\begin{equation}\label{eq:1}
1 - \prod_{r=1}^L (1 - p_r)^{s_r} \geq 1 - e^{-\sum_{r=1}^L p_rs_r}
\end{equation}
We now lower bound $p_r$. The tester rejects iff the set $R_i$ in~\Step{set} contains a point in $U_i$ and in $D_i$.
The probability that $R_i$ {\em does not} contain a point in $U_i$ or $D_i$ is at most 
$\left(1 - |U_i|/2^n\right)^{|R_i|} + \left(1 - |D_i|/2^n\right)^{|R_i|} \leq 2(1 - \mu_i)^{|R_i|}$.
Note that if the sampled dimension $i$ lies in $S_r$, then this probability is at most $2\left(1 - 1/2^r\right)^{3\cdot 2^r} < 1/6$.
Therefore, we get $p_r > \frac{5}{6}\cdot \frac{|S_r|}{n}$. Since $s_r \ge \frac{20n}{\eps\cdot 2^r}$, we get 
\[
\sum_{r=1}^L p_rs_r \geq \sum_{r=1}^L \frac{5}{6}\cdot\frac{|S_r|}{n} \cdot \frac{20n}{\eps 2^r}\geq \frac{100}{6\eps}\sum_{r=1}^L \frac{|S_r|}{2^r} > 1
\]
where the second inequality follows from Claim~\ref{clm:sum}.
Substituting in~\eqref{eq:1}, we get the theorem.
%If the dimension $i$ chosen in~\Step{samp} lies is in $S_r$,
%then the probability $R_i$ contains a point in $U_i$ is 
%
%then the probability of finding a point in $U_i$ (or $D_i$) is at least $\mu_i \geq 1/2^r$.
%A simple Chernoff bound and union bound show that the probability of rejection
%is at least $5/6$.
%Observe that $\EX[X_{r,s}] \geq \Pr[i \in S_r]\Pr[X_{r,s} = 1 | i \in S_r]
%\geq (|S_r|/n)\cdot (5/6)$. By linearity of expectation and \Clm{sum}, 
%$$ \EX[\sum_{r,s} X_{r,s}] \geq \sum_{r \leq \log(8n/\eps)} (1000n/\eps2^r) \cdot (|S_r|/n) \cdot (5/6)
%= (500/\eps) \sum_{r \leq \log(8n/\eps)} |S_r|/2^r \geq 50$$
%Since $X_{r,s}$s are Bernoullis, a Chernoff bound proves that $\Pr[\sum_{r,s} X_{r,s} > 1] \geq 5/6$.
%This completes the proof.
\end{proof}

\section{Acknowledgements}

We thank Oded Goldreich for pointing out the connections to Levin's investment strategy.

\bibliographystyle{alpha}
\bibliography{derivative-testing}

\end{document}